\begin{document}

%
\title{Shift-enabled graphs: Graphs where shift-invariant filters are representable as polynomials of shift operations}
%
%
%

\author{Liyan~Chen,
Samuel~Cheng$^\ast$,~\IEEEmembership{Senior~Member,~IEEE,}
                Vladimir~Stankovic,~\IEEEmembership{Senior~Member,~IEEE,}
        and~Lina~Stankovic,~\IEEEmembership{Senior~Member,~IEEE}
\thanks{L. Chen is with the Department
of Computer Science and Technology, Tongji University, Shanghai,
 201804 China and the School of Mathematics, Physics \& Information Science, Zhejiang Ocean University, Zhejiang, 316022, China (e-mail: chenliyan@tongji.edu.cn).}
\thanks{S. Cheng is with the Department of Computer Science and Technology, Tongji University, Shanghai, 201804 China and the School of Electrical and Computer Engineering, University of Oklahoma, OK 74105, USA (email: samuel.cheng@ou.edu).}
\thanks{V. Stankovic and L. Stankovic are with Department of Electronic and Electrical Engineering,
         University of Strathclyde, Glasgow, G1 1XW U.K.
        (e-mail:\{vladimir.stankovic,~lina.stankovic\}@strath.ac.uk).}
\thanks{$^\ast$ Corresponding author.}
\thanks{
}
}

%
%

\markboth{}%
{Shell \MakeLowercase{\textit{et al.}}: Bare Demo of IEEEtran.cls for IEEE Journals}
%



\newtheorem{lemma}{Lemma}

\maketitle

\begin{abstract}
In digital signal processing, a shift-invariant filter can be represented as a polynomial expansion of a shift operation, that is, the $Z$-transform representation.
When extended to Graph Signal Processing (GSP),
this would mean that a shift-invariant graph filter can be represented as a polynomial of the shift matrix of the graph.
Prior work shows that this holds under the shift-enabled condition that the characteristic and minimum polynomials of the shift matrix are identical.
While the shift-enabled condition is often ignored in the literature, this letter shows that this condition is essential for the following reasons. 
First, we prove that this condition is not just sufficient but also \textbf{\emph{necessary}} for any shift-invariant filter to be representable by the shift matrix.
Moreover, we provide a counterexample showing that given a filter that commutes with a non-shift-enabled graph, it is generally impossible to convert the graph to a shift-enabled graph with a shift matrix still commuting with the original filter. 
The result provides a deeper understanding of shift-invariant filters when applied in GSP and shows that further investigation of shift-enabled graphs is needed to make them applicable to practical scenarios.
 \end{abstract} %

\begin{IEEEkeywords}
Graph Signal Processing, shift-invariant filter, polynomial, Digital Signal Processing.
\end{IEEEkeywords}

%
\IEEEpeerreviewmaketitle

\section{Introduction}
In Digital Signal Processing (DSP), a filter $F(\cdot)$ is a system that takes a signal $\alpha$ as an input and generates a new signal $\tilde {\alpha}=F \left( \alpha \right) $ as an output. An important class of digital filters are time- or shift-invariant filters \cite{Proakis_1996_Digital,Rabiner1975Theory,Oppenheim1999Discrete}, for which the following property holds~\cite{sandryhaila_2014_big_data}:
\begin{equation}
 F_1\left(F_2\left(\alpha \right) \right)=F_2\left(F_1\left( \alpha\right) \right),\label{H1H2}
\end{equation}
which guarantees that, for two filters $F_1(\cdot)$ and $F_2(\cdot)$, the order of the filters does not change the output.
A consequence of \eqref{H1H2} is that a shift-invariant filter can be represented as a polynomial expansion of shift operation, that is, the $Z$-transform representation of the polynomials in $z^{-1}$
~\cite{ Oppenheim1999Discrete, sandryhaila_2014_big_data,puschel2006algebraic}:
\begin{equation}
F\left( z^{-1}\right) =\sum_{n=-\infty}^{+\infty}f_nz^{-n},\label{hz}
\end{equation}
for some coefficients $f_n$ that define the Z-transform.

GSP extends classic DSP to signals with inherent structures by combining algebraic and spectral graph theory with DSP ~\cite{Puschel2008Algebraic,sandryhaila_2013_discrete}. Many DSP concepts can be extended to {\it graph signals}, including frequency analysis, signal convolution, and filtering~\cite{Shuman_2013_The_emerging_field,Girault2015Translation,Ortega2017Graph}. 

Of particular interest to this study is whether a graph filter can be represented as a polynomial of the shift operator matrix $S$ of the graph \cite{Shuman_2013_The_emerging_field}.
If a graph filter $H$ can be represented as a polynomial in $S$,
then filter output can be computed efficiently under any data structure implementation since the output signal at any node only involves inputs of its immediate neighborhood.  Consequently, filtering a signal by $H^k$ can be achieved by repeating the above procedure $k$ times.
Thus, the computation complexity is reduced. Therefore, it is not just theoretically interesting, but is also practically useful to know when a filter can be represented as a polynomial of the shift matrix.

More formally, let $S$ be an $N\times N$ shift matrix. 
By Cayley-Hamilton Theorem, any polynomial $h(S)$ of shift matrix $S$ can be written as a polynomial with degree less than $N$. Thus, the degree of freedom{\footnote{The degree of freedom of a matrix is referred to as the number of independent elements in this matrix.}} of $h(S)$ is at most $N$. Yet, in the matrix representation, a filter matrix $H$ has the degree of freedom $N \times N$, and therefore most filters cannot be represented as polynomials of $S$.
On the other hand, when $H$ can be represented as a polynomial of $S$, it is obvious that $H$ and $S$ commute, i.e., $H$ is a shift-invariant graph filter~\cite{sandryhaila_2013_discrete}. So the interesting question is when is shift-invariant filter $H$  representable as a polynomial of $S$.

For the above problem, \cite{sandryhaila_2013_discrete} provided a sufficient condition that the characteristic polynomial $p_S(\lambda)$ and minimal polynomial $m_S(\lambda)$ of the shift matrix $S$ are identical
(i.e., $p_S(\lambda)=m_S(\lambda)$, which will be referred to as shift-enabled condition; see Definition~\ref{Def1}).
However, it also argued in \cite{sandryhaila_2013_discrete} that this condition could be disregarded in practice (see Theorem 2 in~\cite{sandryhaila_2013_discrete} and the discussion thereafter).
In this letter, we argue that the shift-enabled condition should not be ignored through a concrete counterexample and rigorous analysis.
Current literature avoids the issue by either assuming that the shift operator must be a normal matrix \cite{Marques_2017}, or considering symmetric graphs where the adjacency matrix is diagonalizable \cite{Girault2015Translation} (which often cannot be guaranteed since it comes from the physical problem in hand), 
or imposing the explicit constraint that a filter must be a polynomial of the shift operator 
\cite{Shuman_2013_The_emerging_field} or focusing only on the case when shift matrix has distinct eigenvalues, which implies $p_S(\lambda)=m_S(\lambda)$ \cite{Ortega2017Graph}. 

The main contributions of the paper are as follows:
\begin{itemize}
        \item 
        Contrary to current literature, we illustrate that the shift-enabled condition is essential and should not be overlooked in practice.
        \item
        We tighten up the shift-enabled condition showing that it is not just a sufficient, but also a {\it necessary}, condition for any shift-invariant filter $H$ to be represented as a polynomial in shift matrix $S$.
        \item 
        We provide a counterexample showing that given a filter $H$ that commutes with a non-shift-enabled matrix $S$, 
        it is not always
        possible to design a shift-enabled matrix $\tilde{S}$ such that we can represent $H=h(\tilde{S})$ for some polynomial $h(\cdot)$.     
\end{itemize}

In Section \uppercase\expandafter{\romannumeral2}, the basic concepts of GSP are discussed.
Section \uppercase\expandafter{\romannumeral3} provides the theoretical guarantee that the shift-enabled condition is {\it necessary} for
a shift-invariant graph filter to be representable as a polynomial in $S$. Section \uppercase\expandafter{\romannumeral4} presents a concrete counterexample.
Conclusions and a possible extension are discussed in Section \uppercase\expandafter{\romannumeral5}.



\section{Basic concepts of graph signal processing}

In this section, we briefly review notations and concepts of GSP that are relevant to our study. For more details, see\cite{sandryhaila_2014_big_data,Shuman_2013_The_emerging_field,sandryhaila_2013_discrete,sandryhaila_2014_discrete_frequency,chen_2015_sampling_theory}.

GSP studies signals on graphs, where a graph ${\mathcal G}=({\mathcal V},A)$ is determined by its set of vertices
 ${\mathcal V} =(v_0,v_1,\cdots,v_{N-1}) $ and its adjacency matrix $A$, which reflects the relationship between vertices, such as similarity or dependency.
 A graph signal $\bm{x} :{\mathcal V}\mapsto\mathbb{C}$ is a mapping from a set of vertices to the complex field and can be expressed conveniently as a vector $\bm{x}:$
\begin{equation*}
\bm{x}=\left(x_0,x_1,\cdots,x_{N-1} \right)^T,
\end{equation*}
where for each $i$, $x_i$ is indexed by a node $v_i \in {\mathcal V}$.


For example, for a cyclic directed graph signal $\bm{x}$:
\begin{equation*}
\tilde{\bm{x}}=A\bm{x}=(x_{N-1},x_0,\cdots,x_{N-2})^T,
\end{equation*}
shifts $x_{n-1}$ to the next signal $\tilde{x}_n=x_{(n-1)}\text{mod} N$. 
Hence, the matrix $A$ is sometimes also treated as a shift matrix or {\em shift operation}~\cite{sandryhaila_2013_discrete,sandryhaila_2014_discrete_frequency,sandryhaila_2014_big_data}. The shift matrix can be replaced by other matrices which reflect the relation of vertices, such as
the Laplacian matrix, the normalized Laplacian matrix, the diffusion matrix, and so on \cite{Marques_2017}. In the remaining letter we use $S$ to denote the general shift matrix.

With the graph shift operation defined, the shift-invariant filters can be naturally extended from classic DSP to GSP, with the shift matrix $S$ in place of the shift operation $z^{-1}$. Indeed, a graph signal filter $H(\cdot)$ is shift-invariant if and only if the filter $H(\cdot)$ and shift matrix $S$ satisfy:
\begin{equation}
S(Hx)=H(Sx).
\end{equation}
This means that applying a graph shift to the filtered signal is equivalent to a graph filter applied to the shifted signal\cite{sandryhaila_2013_discrete}.


Note that in classic DSP, Eq.~\eqref{hz} is an immediate consequence of shift-invariance, and hence holds for all shift-invariant filters.
Just as in DSP, a shift-invariant filter $H(\cdot)$ in GSP can be written as a polynomial in $S$, {\it but under the condition $p_S(\lambda)=m_S(\lambda)$}, thus, there exists a shift-invariant graph filter that cannot be represented as a polynomial in $S$, as  discussed in the next section.

\section{The sufficient and necessary condition for shift-invariant filters to be polynomials of the shift operator}

In contrast to~\cite{sandryhaila_2013_discrete}, 
we emphasize that $p_S(\lambda)=m_S(\lambda)$ is not just sufficient but also {\em necessary}
for any shift-invariant graph filter to be representable as a polynomial in $S$.

For convenience, we introduce the notion of a ``shift-enabled" graph as follows:
\newtheorem{myDef}{Definition}
\begin{myDef}\label{Def1}
	A graph ${\mathcal G}$ is shift-enabled 
	{\footnote {Note that this definition differs from \cite{Grady}. In \cite{Grady} a graph is called shift-invariant if there exists a permutation of the node ordering such that the Laplacian matrix representing the graph is circulant.}}
		if the characteristic and minimal polynomials of its corresponding shift matrix $S$ are equivalent, i.e., $p_S(\lambda)=m_S(\lambda)$. 
	We also say that $S$ is shift-enabled when the above condition is satisfied. \end{myDef}

If the shift-enabled condition holds for matrix $S$, we have the following theorem.
\newtheorem{Thm}{Theorem}
\newtheorem{remark}{Remark}
\begin{Thm}\label{Theorem1} 
	The matrix $S$ is shift-enabled if and only if every matrix $H$ commuting with $S$ is a polynomial in $S$. That is, we can write
    \begin{equation}\label{H_poly}
        H=h(S)=h_0I+h_1S+\cdots+h_LS^L ~(h_l\in\mathbb{C}, L\in\mathbb{N}),
        \end{equation}
                                where $I$ as the identity matrix.
\end{Thm}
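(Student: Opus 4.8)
The plan is to route both implications through the classical fact that $p_S(\lambda)=m_S(\lambda)$ is equivalent to $S$ being \emph{non-derogatory}, i.e.\ to the existence of a cyclic vector $v\in\mathbb{C}^N$ for which $\{v,Sv,\dots,S^{N-1}v\}$ is a basis. For the sufficiency direction, assume $S$ is shift-enabled, so $\deg m_S=N$ and such a $v$ exists; let $H$ commute with $S$. Since the $S^iv$ form a basis I can expand $Hv=\sum_{i=0}^{N-1}h_iS^iv=h(S)v$, and then for every $j$, $HS^jv=S^jHv=S^jh(S)v=h(S)S^jv$; as $H$ and $h(S)$ agree on a basis, $H=h(S)$ with $L\le N-1$, consistent with the Cayley--Hamilton bound noted earlier. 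This recovers, with a short self-contained argument, the sufficiency already observed in \cite{sandryhaila_2013_discrete}.

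For the necessity direction I argue the contrapositive: assume $S$ is \emph{not} shift-enabled, i.e.\ $m_S\neq p_S$. Comparing, for each eigenvalue $\lambda_k$, the algebraic multiplicity (the exponent in $p_S$, equal to the total Jordan-block size at $\lambda_k$) with the size of the largest block (the exponent in $m_S$), this inequality forces some eigenvalue $\lambda_0$ to have at least two Jordan blocks. Working in a Jordan basis — legitimate, since $H=h(S)$ iff $P^{-1}HP=h(P^{-1}SP)$ — decompose $\mathbb{C}^N=V_1\oplus V_2\oplus W$, where $V_1,V_2$ are two Jordan blocks for $\lambda_0$ and $W$ is the span of all remaining blocks; all three subspaces are $S$-invariant. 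Take $H$ to be the projection onto $V_1$ along $V_2\oplus W$. Projecting onto an $S$-invariant subspace along an $S$-invariant complement commutes with $S$, so $H$ is a shift-invariant filter. Yet $H$ is not a polynomial in $S$: if $H=h(S)$, then since $h$ applied to a Jordan block with eigenvalue $\lambda_0$ has all diagonal entries equal to $h(\lambda_0)$, the identity $H|_{V_1}=I$ forces $h(\lambda_0)=1$ while $H|_{V_2}=0$ forces $h(\lambda_0)=0$ — a contradiction. Hence a non-shift-enabled $S$ always admits a commuting filter that is not a polynomial in $S$.

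The routine bookkeeping aside, the main point to get right is the incompatibility used in the reverse direction: every polynomial $h(S)$ necessarily acts on \emph{all} Jordan blocks sharing the eigenvalue $\lambda_0$ with the same diagonal value $h(\lambda_0)$, whereas the commuting projection above distinguishes two such blocks — a derogatory eigenvalue is precisely what makes this possible. The only external fact I rely on is ``non-derogatory $\iff$ a cyclic vector exists,'' which I would either cite or prove in two lines via the invariant-factor (rational canonical form) decomposition of $\mathbb{C}^N$ as a $\mathbb{C}[\lambda]$-module; alternatively, the explicit projection could be replaced by the dimension count $\dim_{\mathbb C}\{h(S)\}=\deg m_S$ versus $\dim_{\mathbb C}\{H:HS=SH\}\ge N$ (with equality iff $m_S=p_S$), but I prefer the concrete construction here.
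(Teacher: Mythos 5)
Your proof is correct, but the necessity half takes a genuinely different route from the paper's. The paper (Appendix A) goes through the full structure theorem for the commutant of a Jordan matrix (its Lemma~2, Theorem~12.4.1 of \cite{lancaster_1985_matrix_theory}): every $Y$ commuting with $J$ is built from upper-triangular Toeplitz blocks that may be nonzero whenever $\lambda_l=\lambda_m$, so the commutant has dimension $\sum_{\lambda_l=\lambda_m}\min(n_l,n_m)$, which strictly exceeds $N$ as soon as two Jordan blocks share an eigenvalue, whereas the polynomials in $S$ span at most $N$ dimensions by Cayley--Hamilton; hence a derogatory $S$ must admit a commuting $H$ outside the polynomial algebra. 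You instead exhibit such an $H$ explicitly: the projection onto one of two Jordan blocks sharing $\lambda_0$, which commutes with $S$ because both subspaces are $S$-invariant, yet cannot equal $h(S)$ because $h(S)$ places the same scalar $h(\lambda_0)$ on the diagonal of every block for $\lambda_0$. Your construction is more elementary --- it bypasses the commutant structure theorem entirely, needing only that a derogatory eigenvalue owns two blocks --- and more concrete, since it names an offending filter (the paper's own $3\times 3$ example in Section~IV is essentially an instance of this phenomenon); the paper's count, by contrast, quantifies exactly how large the commutant is relative to the polynomial algebra. You also supply a self-contained cyclic-vector proof of sufficiency, which the paper simply defers to \cite{sandryhaila_2013_discrete}. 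The dimension-count alternative you sketch in your closing remark is, in fact, precisely the paper's argument.
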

The major difference between Theorem~\ref{Theorem1} in our letter and Theorem~1 in~\cite{sandryhaila_2013_discrete} is the {\it necessity} of the shift-enabled condition proved in Appendix A. 
Thus, if $p_S(\lambda)\neq m_S(\lambda)$, then there must exist a filter $H$ that cannot be represented as a polynomial in $S$, even if $HS=SH$. 
However, the shift-enabled condition has been widely ignored in the literature, because of the following theorem from \cite{sandryhaila_2013_discrete}.

\newtheorem{Thm 2}{Theorem 2\cite{sandryhaila_2013_discrete}}
\begin{Thm}\label{Thm2}
        For any matrix $S$, there exists a matrix $\tilde{S}$ and matrix polynomial $r(\cdot)$, such that $S=r(\tilde{S})$ and $\tilde{S}$ is shift-enabled~\rm\cite{sandryhaila_2013_discrete}.
\end{Thm}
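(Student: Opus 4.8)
The plan is to reduce Theorem~\ref{Thm2} to the classical fact that \emph{the commutant of any matrix contains a nonderogatory matrix}, and then to let Theorem~\ref{Theorem1} do the rest. Concretely, write $\mathcal{C}(S):=\{M:MS=SM\}$. If $\tilde{S}\in\mathcal{C}(S)$ is nonderogatory (equivalently, shift-enabled, by Definition~\ref{Def1}), then $S$ commutes with $\tilde{S}$ as well, so Theorem~\ref{Theorem1} applied to the shift-enabled matrix $\tilde{S}$ produces a polynomial $r(\cdot)$ with $S=r(\tilde{S})$. Thus the whole task is to exhibit a nonderogatory $\tilde{S}$ commuting with $S$; this is what I would establish.

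To build such an $\tilde{S}$, I would first pass to the Jordan form. Similarity preserves commutation, the shift-enabled property, and polynomial relations, so we may assume $S=\bigoplus_{j=1}^{s}\bigl(\lambda_j I_{d_j}+N_j\bigr)$ acting on the generalized eigenspaces $V_j$, where the $\lambda_j$ are the distinct eigenvalues, $\dim V_j=d_j$, and each $N_j$ is nilpotent on $V_j$. A matrix commutes with $S$ exactly when it is block diagonal with respect to the $V_j$ with $j$-th block commuting with $N_j$. Hence it suffices to find, for each $j$, a matrix $T_j$ on $V_j$ that commutes with $N_j$ and is nonderogatory on $V_j$; then choose scalars $\mu_1,\dots,\mu_s$ so that the sets $\mu_j+\operatorname{spec}(T_j)$ are pairwise disjoint (only finitely many choices are excluded) and set $\tilde{S}=\bigoplus_{j}\bigl(\mu_j I_{d_j}+T_j\bigr)$. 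Then $\tilde{S}$ commutes with $S$ by construction, and since each block $\mu_j I_{d_j}+T_j$ is nonderogatory with characteristic polynomial coprime to those of the other blocks, the characteristic and minimal polynomials of $\tilde{S}$ coincide, i.e.\ $\tilde{S}$ is shift-enabled.

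What remains --- and this is the heart of the argument --- is the nilpotent case: given a nilpotent $N$ on a space $V$ of dimension $d$, exhibit a nonderogatory matrix in $\mathcal{C}(N)$, i.e.\ an operator $T$ commuting with $N$ that has a cyclic vector $v$, so that $v,Tv,\dots,T^{d-1}v$ is a basis of $V$ and hence $\deg m_T=d$. Writing $N$ in Jordan form with chains $u_i,Nu_i,\dots,N^{m_i-1}u_i$ of lengths $m_1\ge m_2\ge\cdots\ge m_k$, the idea is to interleave these $k$ chains into one ``staircase'' chain of length $d=\sum_i m_i$ and to let $T$ advance along it. The constraint is that $T$ must commute with $N$, i.e.\ must be a $\mathbb{C}[x]$-module endomorphism of $V\cong\bigoplus_i\mathbb{C}[x]/(x^{m_i})$ (with $x$ acting as $N$), so $T$ is pinned down by the images $Tu_i$, each forced to lie in $\ker N^{m_i}$. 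The main obstacle is to choose these images so that the $T$-orbit of one vector sweeps out all of $V$ while respecting the module constraint: when all blocks have equal size a cyclic interleaving works directly, but unequal block sizes require a more delicate staircase (the images $Tu_i$ must be ``shifted down'' by the right powers of $N$ to keep $T$ well defined and to avoid an oversized kernel). Once $T$ is constructed, commutation with $N$ holds by design and nonderogacy is witnessed by the cyclic vector; substituting $T$ into the reduction of the previous paragraph finishes the proof.
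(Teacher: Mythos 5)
Your outer strategy is sound and is genuinely different from the source: this paper does not prove Theorem~\ref{Thm2} at all but defers to~\cite{sandryhaila_2013_discrete}, whose argument keeps the Jordan basis of $S$, replaces the eigenvalues of the individual Jordan blocks by distinct values $\mu_i$ to obtain a shift-enabled $\tilde{S}$, and recovers $S=r(\tilde{S})$ by Hermite interpolation ($r(\mu_i)=\lambda_i$, $r'(\mu_i)=1$, higher derivatives vanishing up to the block size). Your reduction --- exhibit any nonderogatory element $\tilde S$ of the commutant of $S$ and let the sufficiency half of Theorem~\ref{Theorem1} manufacture $r$ --- is correct and arguably cleaner, and the further reduction to one nilpotent operator per distinct eigenvalue, followed by scalar shifts $\mu_j$ chosen to separate the spectra of the blocks, is also fine.

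The gap is precisely in the step you call the heart of the argument. The ``staircase'' you propose --- interleaving the $k$ Jordan chains of $N$ into a single chain of length $d$ and letting $T$ advance along it --- makes $T$ nilpotent with a single Jordan block, and such a $T$ does not exist in general. Indeed, if $T$ is cyclic and commutes with $N$, then $N=p(T)$ for some polynomial $p$; if moreover $T\sim J_d(0)$ and $N$ is nilpotent, then $p$ has zero constant term, and writing $p(x)=x^r u(x)$ with $u(0)\neq 0$ gives $\mathrm{rank}(N^s)=\mathrm{rank}\bigl(J_d(0)^{rs}\bigr)$ for all $s$, forcing the Jordan type of $N$ to be the near-equal partition of $d$ into $r$ parts (sizes $\lceil d/r\rceil$ or $\lfloor d/r\rfloor$). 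For $d=4$ the reachable types are $[4]$, $[2,2]$, $[2,1,1]$, $[1,1,1,1]$, so a nilpotent $N$ with blocks of sizes $3$ and $1$ admits no nilpotent cyclic commuting $T$; no refinement of the staircase can repair this. The fix is immediate and makes the interleaving unnecessary: take $T=\bigoplus_i\bigl(\nu_i I_{m_i}+N_i\bigr)$ with distinct scalars $\nu_i$ assigned to the individual Jordan blocks $N_i$ of $N$. This commutes with $N$, and each summand is a single Jordan block with its own eigenvalue, so $T$ is nonderogatory; your outer argument then closes the proof, at which point the construction essentially coincides with that of~\cite{sandryhaila_2013_discrete}, with Theorem~\ref{Theorem1} standing in for the explicit Hermite interpolation.
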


The proof can be found in \cite{sandryhaila_2013_discrete}.
According to Theorem~\ref{Thm2} in~\cite{sandryhaila_2013_discrete}, it is argued\footnote{We note that the authors focused on adjacency matrix $A$, but the concept can easily be generalized.} that for any graph filter $H$, 
we can construct a composite function $g=h\circ r$ as a polynomial such that $h(S)=g(\tilde S)$, thus the condition $p_S(\lambda)=m_S(\lambda)$ can be ignored since $h(S)$ and $g(\tilde S)$ are now equivalent and $p_{\tilde S}(\lambda)=m_{\tilde S}(\lambda)$ holds. 
While this argument appeared to be promising, we believe that it is unfortunately misleading.
First, the argument seems to ignore that $h(\cdot)$ is not necessarily a polynomial to begin with. Otherwise, if $h(\cdot)$ was already a polynomial, converting $S$ to $\tilde{S}$ would be unnecessary.
Furthermore, Theorem~\ref{Thm2} cannot guarantee that the newly constructed $\tilde{S}$ commutes with $H$ anymore. And so in general, we cannot ensure that $H$ is a polynomial in $\tilde{S}$.


In the following section, we provide a concrete example illustrating that we should not bypass the shift-enabled condition using Theorem 2. It is the intention of the authors that this example sheds some light on how common intuition in DSP cannot always be applied to GSP.

%
%
%
%


%
%

\section{Examples of shift-invariant filters for non-shift-enabled graph not representable as polynomial of converted shift-enabled graph}

\subsection{Example filter}
Consider the special shift matrix
$S=\begin{pmatrix}
\begin{smallmatrix}
0 & 1 & 1 \\
0 & 0 & 0 \\
0 & 0 & 0 \\
\end{smallmatrix}
\end{pmatrix}.
~$ Note that
$p_S\left( \lambda \right)=\lambda^3\neq \lambda^2=m_S\left( \lambda\right)$ 
and hence $S$ is not shift-enabled. Since, as mentioned earlier, this is a necessary (not just merely a sufficient) condition, there exists a  shift-invariant filter not representable as a polynomial of $S$.

Let
\vspace{5pt}
$H=\begin{pmatrix}
\begin{smallmatrix}
0 & 1 & 0\\
0 & 0 & 0 \\
0 & 0 & 0 \\
\end{smallmatrix}
\end{pmatrix}
$ be a filter.
Note that $HS=\bm{0}=SH$ and thus the filter is shift-invariant. 
Yet, it is impossible to find polynomial representation of $H$ in terms of $S$.
Indeed, note that $(S^n)_{1,2}=(S^n)_{1,3}$, $\forall n\in{\{0,1,2,\cdots\}}$. Hence, we must have $(h(S))_{1,2}=(h(S))_{1,3}$ for any polynomial $h(S)$. But since $H_{1,2}=1\neq 0=H_{1,3}$, $H\neq h(S)$ for any polynomial function $h(\cdot)$.
\subsection{Conversion of non-shift-enabled graphs to shift-enabled graphs}
Ref.~\cite{sandryhaila_2013_discrete} suggested that shift-enabled condition can be ignored as we could convert the graph into a shift-enabled graph through a polynomial transformation. But as we will show in the following, any ``transformed" $S$ will not commute with the filter $H$, which makes the transformation pointless. 


Assume $S$ can be written as a polynomial of a shift-enabled  $\tilde{S}\triangleq
\begin{pmatrix}
\begin{smallmatrix}
\tilde{s}_{11} & \tilde{s}_{12} & \tilde{s}_{13}\\
\tilde{s}_{21} & \tilde{s}_{22} & \tilde{s}_{23}\\
\tilde{s}_{31} & \tilde{s}_{32}& \tilde{s}_{33}\\
\end{smallmatrix}
\end{pmatrix}$ as stated in Theorem 2. If we also assume that $H$ is shift-invariant under $\tilde S$ as before ($H\tilde{S}=\tilde{S}H$), we have
$\tilde{S}=
\begin{pmatrix}
        \begin{smallmatrix}
                \tilde{s}_{11} & \tilde{s}_{12} &\tilde{s}_{13}\\
                0 & \tilde{s}_{22} & 0\\
                0 & \tilde{s}_{32}& \tilde{s}_{33}\\
\end{smallmatrix}
\end{pmatrix}$ and $\tilde{s}_{11}=\tilde{s}_{22}$.
But then this will contradict with either  $\tilde {S}$ being shift-enabled or $S$ representable as a polynomial of $\tilde S$ as shown below. 

There are two possible cases.
\begin{enumerate}
\item  If $\tilde{s}_{11}=\tilde{s}_{22}=\tilde{s}_{33}$ then
$\tilde{S}=
\begin{pmatrix}
\begin{smallmatrix}
\tilde{s}_{11} & \tilde{s}_{12} &\tilde{s}_{13}\\
0 & \tilde{s}_{11} & 0\\
0& \tilde{s}_{32}& \tilde{s}_{11}\\
\end{smallmatrix}
\end{pmatrix}$
and we have  $p_{\tilde{S}}(\lambda)=(\lambda-\tilde{s}_{11})^3$. Note that
$(\tilde{S}-\tilde{s}_{11}I)^2=
\begin{pmatrix}
        \begin{smallmatrix} 0 & \tilde{s}_{13}\tilde{s}_{32} & 0 \\
        0 & 0 & 0\\
        0 & 0 & 0
        \end{smallmatrix}
\end{pmatrix}$ and  consequently we must have $\tilde{s}_{13}\tilde{s}_{32}\neq0$ because, otherwise, this\  will break the shift-enabled condition as 
 $p_{\tilde S}(\lambda)=(\lambda-\tilde{s}_{11})^3\neq(\lambda-\tilde{s}_{11})^2 =m_{\tilde S}(\lambda)$.

On the other hand, $S$ cannot be represented as polynomial in $\tilde{S}$ when $\tilde{s}_{13}\tilde{s}_{32}\neq0$. Following Cayley-Hamilton Theorem, 
$S=r(\tilde{S})=r_0I+r_1\tilde{S}+r_2{\tilde{S}}^2$. From there we obtain two equations:
\begin{equation}{\label{case1_1}}
\begin{cases}
\tilde{s}_{13}(r_1+2r_2\tilde{s}_{11})=1,\\
\tilde{s}_{32}(r_1+2r_2\tilde{s}_{11})=0.\\
\end{cases}
\end{equation}
One can easily conclude that $\tilde{s}_{32}=0$, but  this contradicts $\tilde{s}_{13}\tilde{s}_{32}\neq0$.
Therefore, when  $\tilde{s}_{11}=\tilde{s}_{22}=\tilde{s}_{33}$, we cannot find an $\tilde{S}$ which satisfies $S=r(\tilde{S})$, $p_{\tilde{S}}(\lambda)=m_{\tilde{S}}(\lambda)$ and $H\tilde{S}=\tilde{S}H$ at the same time.

\item  If $\tilde{s}_{11}=\tilde{s}_{22} \neq \tilde{s}_{33}$, 
we will also obtain conflicting results. Assume that
$S=r(\tilde{S})=r_0I+r_1\tilde{S}+r_2{\tilde{S}}^2$, then we obtain  three equations that cannot simultaneously hold:

\begin{subnumcases}{}
r_0+r_1\tilde{s}_{11}+r_2{\tilde{s}_{11}}^2=0, \label{eqn:7a}\\
r_0+r_1\tilde{s}_{33}+r_2{\tilde{s}_{33}}^2=0, \label{eqn:7b} \\
\tilde{s}_{13}[r_1+r_2(\tilde{s}_{11}+\tilde{s}_{33})]=1. \label{eqn:7c}
\end{subnumcases}\\

By combining the results of Eq.~\eqref{eqn:7a} minus Eq.~\eqref{eqn:7b} and $\tilde{s}_{11}\neq \tilde{s}_{33}$, we have $r_1+r_2(\tilde{s}_{11}+\tilde{s}_{33})=0$ which contradict with Eq.~\eqref{eqn:7c}.
Therefore, $S=r(\tilde{S})$ and $H\tilde{S}=\tilde{S}H$ cannot simultaneously hold  when $\tilde{s}_{11}\neq \tilde{s}_{33}$.
\end{enumerate}

As shown in the the above example, while we can find a new $\tilde S$ such that $S=r(\tilde{S})$ and $p_{\tilde S}(\lambda) = m_{\tilde S}(\lambda)$, any new $\tilde S$ no longer commutes with the original filter $H$ and thus it is impossible to use Theorem~\ref{Thm2} to prove that $H$ is representable by $\tilde S$.
On the other hand, note that in Theorem~\ref{Theorem1}, $H\tilde{S} = \tilde{S}H$ is only a sufficient condition for $H$ to be representable by $\tilde S$ (given that $\tilde S$ is shift-enabled).
So potentially, there may still exist $\tilde S$ such that $S=r(\tilde S)$ and $H=h(\tilde S)$ (but $H\tilde{S} = \tilde{S}H$ is not satisfied). However, we have evaluated our example using the  symbolic math toolbox of MATLAB and concluded that such $\tilde S$ (regardless if $H\tilde{S} = \tilde{S}H$) does not exist as well.

In summary, with the condition given by Theorem~\ref{Thm2} ($S=r(\tilde S)$), the $H$ in our example cannot be represented as polynomial of any such $\tilde S$.
That is, there exists a graph ${\mathcal G}$ associated with a non-shift-enabled shift matrix $S$ and a corresponding shift-invariant filter $H$. Yet, it is impossible to represent $H$ as a polynomial of $S$ or $\tilde{S}$ regardless of the conversion procedure suggested by Theorem~\ref{Thm2}. 
\subsection{A class of filters}
The above example can be extended to the following class of filters:
\begin{equation*}
\mathbb{H}=\{\alpha H+ q(S)|\alpha \in \mathbb{R}, q(S) \mbox{ is a polynomial of $S$}\} \label{a_class_of_H},
\end{equation*}
where $S$ and $H$ are as defined in the previous subsection.

Since apparently $q(S)S=Sq(S)$ for any polynomial $q(S)$ and $HS=SH$ as discussed above, any filter $H'\triangleq\alpha H + q(S)$ commutes with $S$ as well.
Thus any filter in $\mathbb{H}$ is shift-invariant. 
However, since $H$ is not representable as a polynomial of transformed matrix $\tilde{S}$, so does 
$H'$. Otherwise, since $q(S)=q\circ r(\tilde{S})$, $H=\frac{1}{\alpha}(H'-q\circ r(\tilde S))$ is also a polynomial in $\tilde{S}$ and this contradicts with what we have shown earlier.
Therefore, when the shift-enabled condition is not satisfied, we may in fact find an infinite number of shift-invariant filters that violated the argument drawn from Theorem~\ref{Thm2}.

\section{Conclusion}


This letter emphasizes the importance and {\it necessity} of the shift-enabled condition.
Then, we present a counterexample showing that given a filter $H$ that commutes with a non-shift-enabled matrix $S$, it is generally impossible to convert $S$ to a shift-enabled matrix $\tilde{S}$ such that we can represent $H=h(\tilde{S})$ for some polynomial $h(\cdot)$.
This 
provides a deeper understanding of shift-invariant filters under the GSP umbrella. In fact, we conjecture that $p_S(\lambda)=m_S(\lambda)$ may have deeper implications, and these corresponding  shift-enabled graphs that demonstrate enhanced properties of shift-invariant filters may have distinct characteristics and structures apart from graphs that do not satisfy the condition.

An apparent future direction is to study rules and structures that may be used to identify the shift-enabled graphs. Moreover, it is also interesting to see if one may decompose non-shift-enabled graphs into shift-enabled subgraphs, to optimize the design of the GSP filters.


%

\appendices
\section{} 
%
We will only include here the proof that the shift-enabled condition is necessary for any shift-invariant matrix to be representable as a polynomial of the shift matrix. Please refer to~\cite{sandryhaila_2013_discrete} for the proof of sufficiency.


\newtheorem{prop}{Proposition}
\begin{prop}\rm{(The necessity of shift-enabled condition.)}\label{prop_Jordon_poly}
	If every matrix $H$ commuting with $S$ is a polynomial in $S$ then $S$ is shift-enabled ($p_S(\lambda)=m_S(\lambda)$).
\end{prop}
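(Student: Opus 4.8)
The plan is to prove the contrapositive: assuming $S$ is \emph{not} shift-enabled, so that $m_S(\lambda)\neq p_S(\lambda)$, I would exhibit a matrix $H$ with $HS=SH$ that is not a polynomial in $S$. Since $m_S\mid p_S$, both are monic, and $\deg p_S=N$, the assumption $m_S\neq p_S$ forces $\deg m_S<N$; the whole argument then reduces to comparing two dimensions.

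First I would determine $\dim_{\mathbb C}\mathbb{C}[S]$, where $\mathbb{C}[S]:=\{h(S):h\text{ a polynomial}\}$ is the set appearing on the right-hand side of \eqref{H_poly}. The powers $I,S,\dots,S^{\deg m_S-1}$ are linearly independent (a nontrivial vanishing combination would be a nonzero polynomial of degree $<\deg m_S$ killing $S$, contradicting minimality of $m_S$) and they span $\mathbb{C}[S]$ (since $m_S(S)=0$ lets one rewrite every higher power through lower ones). Hence $\dim_{\mathbb C}\mathbb{C}[S]=\deg m_S<N$.

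Next I would bound the centraliser $C(S):=\{H:HS=SH\}$ from below by $N$. I would obtain this from the rational canonical form of $S$: write $\mathbb{C}^N=V_1\oplus\cdots\oplus V_r$ as a direct sum of $S$-cyclic subspaces with $\dim V_i=d_i$ and $\sum_i d_i=N$; taking, independently on each $V_i$, the restriction to $V_i$ of an arbitrary element of $\mathbb{C}[S]$ already produces $d_i$ linearly independent commuting operators, so $\dim C(S)\ge\sum_i d_i=N$. (Equivalently one may invoke the classical fact that the commutant of any $N\times N$ matrix has dimension at least $N$.) Combining the two counts, $\dim_{\mathbb C}\mathbb{C}[S]=\deg m_S<N\le\dim C(S)$, so the inclusion $\mathbb{C}[S]\subseteq C(S)$ is strict; any $H\in C(S)\setminus\mathbb{C}[S]$ then commutes with $S$ yet is not a polynomial in $S$, contradicting the hypothesis. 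This completes the proof.

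The only non-routine ingredient is the lower bound $\dim C(S)\ge N$; everything else is bookkeeping about $m_S$. If one prefers to keep the letter self-contained and avoid citing the rational canonical form, the alternative is an explicit witness: $\deg m_S<N$ means the Jordan form of $S$ over $\mathbb C$ has some eigenvalue occurring in at least two Jordan blocks, and a single off-diagonal entry in the Jordan basis connecting two such blocks gives a matrix that commutes with $S$ but whose corresponding entry is forced to vanish for every polynomial in $S$. This is exactly the mechanism behind the concrete $S$ and $H$ used in Section~\uppercase\expandafter{\romannumeral4}, where $(S^n)_{1,2}=(S^n)_{1,3}$ for all $n$ while $H_{1,2}\neq H_{1,3}$. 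Either route closes the argument.
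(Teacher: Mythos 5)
Your argument is correct, and although it shares the paper's overall skeleton --- comparing $\dim_{\mathbb C}\mathbb{C}[S]$ against the dimension of the centraliser $C(S)=\{H:HS=SH\}$ --- it puts the strictness in a different place and rests on a different key lemma. The paper bounds the polynomial algebra crudely by $N$ via Cayley--Hamilton and does all the hard work on the centraliser side: Lemma~\ref{Lemma2_cummute_Jordan} describes $C(S)$ explicitly through upper-triangular Toeplitz blocks in a Jordan basis, which yields $\dim C(S)>n_1+\cdots+n_K=N$ whenever two Jordan blocks share an eigenvalue, and Lemma~\ref{Lemma1_Jordan_eigen} translates that back into $p_S\neq m_S$. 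You instead use the sharp identity $\dim_{\mathbb C}\mathbb{C}[S]=\deg m_S$, so that $m_S\neq p_S$ immediately gives $\dim_{\mathbb C}\mathbb{C}[S]<N$, and you then need only the non-strict classical bound $\dim C(S)\ge N$, which you obtain correctly from a decomposition into $S$-cyclic subspaces: on a cyclic summand $V_i$ the restriction $S|_{V_i}$ is nonderogatory, so its polynomials, extended by zero on the complementary summands, already supply $d_i$ independent elements of $C(S)$. This bypasses the Jordan/Toeplitz bookkeeping of Lemma~\ref{Lemma2_cummute_Jordan} entirely and is shorter and more self-contained at the level of the dimension count; what the paper's version buys in exchange is an explicit description of $C(S)$, which is precisely what produces the concrete witness $H$ used in the counterexample section, whereas your main line of argument is purely existential. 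Your closing remark --- extracting a witness from an off-diagonal coupling between two Jordan blocks with a common eigenvalue --- recovers that constructive content and is indeed the mechanism behind the paper's $3\times 3$ example, so either of your two routes closes the proof.
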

Proposition~\ref{prop_Jordon_poly} shows the necessity of shift-enabled condition. Combining with sufficiency in~\cite{sandryhaila_2013_discrete}, shift-enabled condition is not only sufficient but also {\it necessary} condition.
\begin{lemma}\label{Lemma1_Jordan_eigen}
        A graph shift matrix $S$ is shift-enabled if and only if each Jordan block in the Jordan canonical form of $S$ is associated with a distinct eigenvalue \rm{(see Proposition 6.6.2 in \cite{lancaster_1985_matrix_theory}. Proof: omitted.)}.
\end{lemma}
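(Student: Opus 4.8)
The plan is to compare the characteristic polynomial $p_S$ and the minimal polynomial $m_S$ of $S$ directly through its Jordan canonical form. Since similar matrices share both $p_S$ and $m_S$, I may replace $S$ by its Jordan form $J=\bigoplus_{i,j}J_{s_{ij}}(\lambda_i)$, where $\lambda_1,\dots,\lambda_r$ are the distinct eigenvalues of $S$ and $s_{i1},\dots,s_{ik_i}$ are the sizes of the Jordan blocks attached to $\lambda_i$.

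First I would record the two standard block formulas. From $p_S(\lambda)=\det(\lambda I-J)$ together with the fact that the determinant of a block-diagonal matrix is the product of the blocks' determinants, $p_S(\lambda)=\prod_{i=1}^{r}(\lambda-\lambda_i)^{a_i}$ with $a_i=\sum_{j=1}^{k_i}s_{ij}$ the algebraic multiplicity of $\lambda_i$. For the minimal polynomial, I would use that a single Jordan block $J_{s}(\lambda_i)$ has minimal polynomial $(\lambda-\lambda_i)^{s}$ --- because $(J_{s}(\lambda_i)-\lambda_i I)^{s}=0$ while $(J_{s}(\lambda_i)-\lambda_i I)^{s-1}\neq 0$ --- and that the minimal polynomial of a direct sum is the least common multiple of the summands' minimal polynomials; this gives $m_S(\lambda)=\prod_{i=1}^{r}(\lambda-\lambda_i)^{d_i}$ with $d_i=\max_{j}s_{ij}$, the size of the largest Jordan block at $\lambda_i$.

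The equivalence is then immediate. Both $p_S$ and $m_S$ are monic with the same set of roots $\{\lambda_i\}$, and $d_i\le a_i$ always, so $p_S=m_S$ holds if and only if $a_i=d_i$ for every $i$, i.e.\ $\sum_{j=1}^{k_i}s_{ij}=\max_{j}s_{ij}$ for every $i$. Since each $s_{ij}\ge 1$, a sum of such terms equals its own maximum exactly when there is a single term, i.e.\ when $k_i=1$. Hence $p_S=m_S$ if and only if every eigenvalue is carried by exactly one Jordan block, which is precisely the assertion that each Jordan block in the Jordan form of $S$ is associated with a distinct eigenvalue.

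The step with any real subtlety is the minimal-polynomial formula --- specifically, justifying that the exponent of $(\lambda-\lambda_i)$ in $m_S$ equals the size of the largest Jordan block at $\lambda_i$ (equivalently, the index of nilpotency of $S-\lambda_i I$ restricted to the generalized eigenspace of $\lambda_i$); the corresponding statement for $p_S$ is just the definition of algebraic multiplicity, and passing to the Jordan form costs nothing since similarity preserves both polynomials. With these textbook facts (cf.\ \cite{lancaster_1985_matrix_theory}) in hand, the lemma reduces to the two-line multiplicity comparison above.
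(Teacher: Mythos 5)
Your proof is correct and complete; it is the standard multiplicity comparison (algebraic multiplicity versus largest Jordan block size) that the paper itself does not spell out, delegating the lemma to Proposition 6.6.2 of \cite{lancaster_1985_matrix_theory} with ``Proof: omitted.'' Since your argument is exactly the textbook route that citation points to, there is nothing to add.
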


Moreover, all matrices commuting with the Jordan matrix satisfy conditions described in the following lemma.

\begin{lemma}\label{Lemma2_cummute_Jordan}
Let $S= PJ P^{-1}$ with $J=diag[J_1,\cdots,J_K]$ being its Jordan normal form, where $J_k$ is an $n_k\times n_k$ Jordan block associated with eigenvalue $\lambda_k$, for $k=1,2,\cdots, K$.
Then $SX=XS$ if and only if $X=PY P^{-1}$, where $Y$ is a block matrix whose blocks with dimensions match with $J$. That is, the $(l,m)$-th block  $Y^{(l,m)}$ has size $n_l\times n_m$. Moreover, $Y^{(l,m)}=0$ if $\lambda_m \neq \lambda_l$. Otherwise, $Y^{(l,m)}$ has the following forms: 
$Y^{(l,m)}=\begin{bmatrix}\begin{smallmatrix} 0 & U\end{smallmatrix}\end{bmatrix}$ for $n_l\leqslant n_m$ and $Y^{(l,m)}=\begin{bmatrix} \begin{smallmatrix} U\\ 0\\ \end{smallmatrix} \end{bmatrix}$
for $n_l> n_m$, where $U$ is a square upper-triangular Toeplitz matrix \rm{(Theorem 12.4.1 of~\cite{lancaster_1985_matrix_theory})}.
\end{lemma}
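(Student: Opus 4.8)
The plan is to pass to the Jordan basis, where the commutation equation splits into independent block equations, and then to solve each of those explicitly. Putting $Y=P^{-1}XP$, the identity $SX=XS$ is equivalent to $JY=YJ$, so we may assume $S=J$. Partition $Y$ into blocks $Y^{(l,m)}$ of size $n_l\times n_m$ conformally with $J=\mathrm{diag}[J_1,\dots,J_K]$. Since $J$ is block-diagonal, $JY=YJ$ is equivalent to the decoupled family $J_lY^{(l,m)}=Y^{(l,m)}J_m$ for $1\le l,m\le K$. As this is an equivalence, it is enough to describe, for each $(l,m)$, the whole solution set of the single equation $J_lZ=ZJ_m$ among $n_l\times n_m$ matrices $Z$; this proves both directions of the lemma at once.

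First I would kill the off-diagonal blocks. Suppose $\lambda_l\neq\lambda_m$. From $J_lZ=ZJ_m$ we get $J_l^kZ=ZJ_m^k$ for every $k$, hence $p(J_l)Z=Zp(J_m)$ for every polynomial $p$. Choosing $p(t)=(t-\lambda_m)^{n_m}$ makes $p(J_m)=N_m^{n_m}=0$, so $(J_l-\lambda_m I)^{n_m}Z=0$; but $J_l-\lambda_m I=(\lambda_l-\lambda_m)I+N_l$ is a nonzero scalar matrix plus a nilpotent, hence invertible, which forces $Z=0$. So $Y^{(l,m)}=0$ whenever $\lambda_l\neq\lambda_m$.

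Next, when $\lambda_l=\lambda_m$ the equation becomes $N_lZ=ZN_m$, where $N_l,N_m$ are the superdiagonal nilpotents. Here left multiplication by $N_l$ shifts the rows of $Z$ up by one (and zeroes the last row), and right multiplication by $N_m$ shifts the columns of $Z$ right by one (and zeroes the first column). Comparing entries yields three facts: $Z_{i,j}=Z_{i-1,j-1}$ for $2\le i\le n_l$ and $2\le j\le n_m$ (so $Z$ is constant along every diagonal $\{i-j=c\}$, i.e.\ Toeplitz); the first column of $Z$ vanishes except possibly at $(1,1)$; and the last row of $Z$ vanishes except possibly at $(n_l,n_m)$. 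Propagating the Toeplitz relation through the two vanishing constraints forces every diagonal with $c\ge 1$ to vanish (from the first column) and every diagonal with $c\ge n_l-n_m+1$ to vanish (from the last row). For $n_l\le n_m$ the last-row constraint is the stronger one: the surviving diagonals, those with $c\le n_l-n_m$, occupy exactly the last $n_l$ columns and, after relabelling those columns, form a square upper-triangular Toeplitz block, so $Y^{(l,m)}=\begin{bmatrix}\begin{smallmatrix}0 & U\end{smallmatrix}\end{bmatrix}$ with $U$ of size $n_l\times n_l$. For $n_l>n_m$ the first-column constraint is stronger: the surviving diagonals, those with $c\le 0$, occupy exactly the top $n_m$ rows and form a square upper-triangular Toeplitz block, so $Y^{(l,m)}=\begin{bmatrix}\begin{smallmatrix}U\\0\end{smallmatrix}\end{bmatrix}$ with $U$ of size $n_m\times n_m$. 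The reverse inclusion is immediate, since any block matrix built from blocks of these shapes visibly satisfies $J_lY^{(l,m)}=Y^{(l,m)}J_m$ block by block.

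The conjugation reduction, the block decoupling, and the argument eliminating the $\lambda_l\neq\lambda_m$ blocks are all routine. The step I expect to demand the most care is the diagonal bookkeeping in the $\lambda_l=\lambda_m$ case: one must identify exactly which diagonals are annihilated by the first-column and last-row constraints, decide which of the two dominates according to the sign of $n_l-n_m$, and then verify that the remaining entries genuinely assemble into the claimed $\begin{bmatrix}\begin{smallmatrix}0 & U\end{smallmatrix}\end{bmatrix}$ or $\begin{bmatrix}\begin{smallmatrix}U\\0\end{smallmatrix}\end{bmatrix}$ pattern with $U$ a square upper-triangular Toeplitz matrix of the right size. (This recovers Theorem~12.4.1 of~\cite{lancaster_1985_matrix_theory}.)
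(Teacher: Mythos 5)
Your proof is correct and follows essentially the same route as the paper's: conjugate to the Jordan basis, decouple $JY=YJ$ into the block equations $J_lY^{(l,m)}=Y^{(l,m)}J_m$, kill the blocks with $\lambda_l\neq\lambda_m$, and read off the shifted upper-triangular Toeplitz structure from the row/column-shift action of the nilpotent parts when $\lambda_l=\lambda_m$. The only (minor) difference is in the unequal-eigenvalue case, where you use polynomial annihilation and the invertibility of $J_l-\lambda_m I$ instead of the paper's iterated commutator/binomial expansion; both are standard and your diagonal bookkeeping in the equal-eigenvalue case is, if anything, more careful than the paper's.
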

\begin{proof}
As $S=PJP^{-1}$ and $SX=XS$, then $JY=YJ$, where $Y=P^{-1}XP$. Since $J$ is a block diagonal matrix, this reduces to
\begin{equation}\label{eqn:JY=YJ}
J_lY^{(l,m)}=Y^{(l,m)}J_m.
\end{equation}
Note that $J_k=\lambda_k I_{n_k} + N_{n_k}$ 
is a $n_k\times n_k$ matrix with constant 1 at upper-diagonal and 0 everywhere else. Then, Eq.~\eqref{eqn:JY=YJ} is equal to
\begin{equation}
(\lambda_l -\lambda_m) Y^{(l,m)}=Y^{(l,m)} N_{n_m} - N_{n_l}Y^{(l,m)}.
\label{eqn:jordan_commute1}
\end{equation}
Consider two possible cases.
\begin{enumerate}
\item If $\lambda_l\neq\lambda_m$, then multiple $(\lambda_l - \lambda_m)$ on both sides of Eq.~\eqref{eqn:jordan_commute1} and reapply Eq.~\eqref{eqn:jordan_commute1}. We will have
	   \begin{flalign*}
	    \begin{split}
	       &~~~(\lambda_l-\lambda_m)^2 Y^{(l,m)}\\
	       & =Y^{(l,m)}(N_{n_m})^2-2N_{n_l}Y^{(l,m)}N_{n_m}+(N_{n_l})^2Y^{(l,m)}.
	     \end{split}
	    \end{flalign*}
	Repeat this step iteratively and ultimately for any positive integer $t$, we will have
	\begin{align*}(\lambda_l\!-\!\lambda_m)^t Y^{(l,m)}\!=\!\sum_{i=0}^t (\!-\!1)^i {\dbinom{t}{i}} (N_{n_l})^i Y^{(l,m)} (N_{n_m})^{t-i}.
	 \end{align*}

	Note that $(N_{n_k})^{n_k}=0, $ for $k=1,\cdots,K $. Therefore, for sufficiently large $t$, we have either $(N_{n_l})^i=0$ or $(N_{n_m})^{t-i}=0$. Combining with $\lambda_l\neq \lambda_m$, we have $Y^{(l,m)}=0$.
	\item If $\lambda_l=\lambda_m$, then Eq.~\eqref{eqn:jordan_commute1} becomes $Y^{(l,m)} N_{n_m}= N_{n_l} Y^{(l,m)}$.  Recall that the dimension of $Y^{(l,m)}$ is $n_l \times n_m$. Let us first assume that $n_l \leqslant n_m$, so $Y^{(l,m)}$ is a ``fat" matrix.  Note that $Y^{(l,m)} N_{n_m}$ is equivalent to shifting $Y^{(l,m)}$ to the right by one column (with all zeros filling the first column). And similarly, $N_{n_l} Y^{(l,m)}$ is equivalent to shifting $Y^{(l,m)}$ up by one row (with all zeros filled in the last row).  So comparing the two matrices for the last row, we immediately have $Y^{(l,m)}_{n_l,j}=0$ for $j=1,\cdots,n_m-1$.  Let $Y^{(l,m)}_{n_l,n_m}=\delta_1$. And now compare again\ for the second last row, we have $Y^{(l,m)}_{n_l-1,j}=0$ for $j=1,\cdots, n_m-2$ and $Y^{(l,m)}_{n_l-1,m_l-1}= \delta_1$.  Again, we have freedom to choose the last element, this time let $Y^{(l,m)}_{n_l-1,m_l}=\delta_2$, and repeat the comparison for the third last row and so on. This shows that $Y^{(l,m)}=[0_{n_m-n_l}\ U]$, where $U$ is a upper triangular Toeplitz matrix with diagonal $\delta_1$, first superdiagonal $\delta_2$, second superdiagonal $\delta_3$ and so on.  We can prove the other case in a similar manner when $Y^{(l,m)}$ is thin (i.e., $n_l > n_m$).
\end{enumerate}
\end{proof}
We will use the above lemmas to prove the Proposition~\ref{prop_Jordon_poly}.
\begin{proof}
Let $S$ be an $N\times N$ matrix and have the same form as in Lemma~\ref{Lemma2_cummute_Jordan}.
	Assume that for all $H$, if $HS=SH$, then $H=h(S)$ for some polynomial $h(\cdot)$. By Cayley-Hamilton, it is sufficient to consider $h(\cdot)$ of degree at most $N-1$. Thus the degree of freedom of $H$ is at most $N$. However, from Lemma~\ref{Lemma2_cummute_Jordan},
	the degree of freedom of $H$ to satisfies $HS=SH$ is always larger than  $n_1 + n_2 + \cdots + n_K=N$ unless all Jordan blocks have distinct eigenvalues. Thus from Lemma~\ref{Lemma1_Jordan_eigen}, $S$ is shift-enabled.
\end{proof}

\section*{Acknowledgment}

The authors thank M. Ye, B. Zhao and D. Jakovetic for helpful discussions. This project has received funding from  the European Union’s Horizon 2020 research and innovation programme under the Marie Sklodowska-Curie grant agreement no. 734331 and  the Fundamental Research Funds for the Central Universities no. 0800219369.

\ifCLASSOPTIONcaptionsoff
  \newpage
\fi




\bibliographystyle{IEEEtran}
\bibliography{reference}

\end{document}